\definecolor{my_GREEN}{rgb}{0,0.5,0}
\definecolor{LightCyan}{rgb}{0.95,0.95,0.9}
\definecolor{my_Gray}{rgb}{0.85,0.85,0.85}
\definecolor{myCOLOR1}{RGB}{0,0,255}
\newcommand{\editKG}[1]{{\textcolor{black}{#1}}}
\newcommand{\editKSG}[1]{{\textcolor{black}{#1}}}
\newtheorem{theorem}{Theorem}[section]
\newtheorem{proposition}[theorem]{Proposition}
\newtheorem{lemma}[theorem]{Lemma}
\newtheorem{definition}[theorem]{Definition}
\newtheorem{remark}[theorem]{Remark}
\DeclareMathOperator{\sgn}{sgn}
\newcommand{\Rmnum}[1]{\expandafter\@slowromancap\romannumeral #1@}
\title{\LARGE \bf
Constant Bearing Pursuit on Branching Graphs
}
\author{Kevin S. Galloway$^{1}$ and Biswadip Dey$^{2}$
\thanks{*The second author's research was supported in part by the Office of Naval Research under ONR grant N00014-14-1-0635.}%
\thanks{$^{1}$Kevin S. Galloway is with the Electrical and Computer Engineering Department, United States Naval Academy, Annapolis, MD 21402, USA. {\tt\small kgallowa@usna.edu}}
\thanks{$^{2}$Biswadip Dey is with the Department of Mechanical and Aerospace Engineering, Princeton University, Princeton, NJ 08544, USA. {\tt\small biswadip@princeton.edu}}
}
\begin{document}
%
%
\maketitle
\thispagestyle{empty}
\pagestyle{empty}
%
%
\begin{abstract}
Cyclic pursuit frameworks provide an efficient way to create useful global behaviors out of pairwise interactions in a collective of autonomous robots. Earlier work studied cyclic pursuit with a constant bearing (CB) pursuit law, and has demonstrated the existence of a variety of interesting behaviors for the corresponding dynamics. In this work, by attaching multiple branches to a single cycle, we introduce a modified version of this framework which allows us to consider any weakly connected pursuit graph where each node has an outdegree of $1$. This provides a further generalization of the cyclic pursuit setting. Then, after showing existence of relative equilibria (rectilinear or circling motion), pure shape equilibria (spiraling motion) and periodic orbits, we also derive necessary conditions for stability of a 3-agent collective. By paving a way for individual agents to join or leave a collective without perturbing the motion of others, our approach leads to improved reliability of the overall system.
\end{abstract}
\begin{keywords}
Multi-agent systems; Decentralized control; Pursuit problems; Autonomous mobile robots
\end{keywords}
%

%
%
%
\section{Background}
\label{sec:intro}
%
Previous research \cite{Justh_PSK_SCL04, 5160735, CollectiveMot_PrstEscp_Couzin} has demonstrated the relevance and scope of pursuit interactions as a building block for collective behavior. In particular, cyclic pursuit (wherein agents pursue each other over a cycle graph) can be used to synthesize rectilinear, circling, and spiraling motions \cite{Marshall2004, Marshall20063, Sinha20071954, Galloway2013SymmetryStrategies, Galloway2016SymmetryPursuit}, which provide effective tools for a variety of missions, such as search-and-rescue and environmental sensing. A modified version of cyclic pursuit can also be employed in a beacon-referenced setting, wherein the beacon represents a target of interest for an unmanned vehicle or an attractive food source in a biological context \cite{Galloway2015StationPursuit, KSG_BD_2016_ACC, Mallik_Sinha_ECC_15, Daingade2016AImplementation}.

If each agent pursues exactly one other agent in a collective, and the pursuit graph is weakly connected, then the graph will contain a single cycle \cite{Galloway2013SymmetryStrategies}. While  previous work such as \cite{Marshall2004, Marshall20063, Sinha20071954, Galloway2013SymmetryStrategies, Galloway2016SymmetryPursuit} dealt with pursuit graphs which were only single cycles, in this current work we consider a more general case wherein the cycle may have \emph{branches} connected to it. We demonstrate that many of the same collective behaviors can be achieved (i.e. rectilinear and circling \editKG{motions}, as well as shape-preserving spirals), with several added benefits. 

\editKSG{One benefit of our proposed framework is that it} provides a straightforward method for independent agents to join (or leave) an existing collective without disturbing the other agents\editKG{. This also makes the collective more robust to unexpected agent losses, as compared to a collective composed of a single cycle, since branch agents can be lost without affecting the trajectories of other agents.} Similarly, our approach provides a method for agents to select and adjust their station within the collective, simply by adjusting the CB control parameter. This also provides a method for a small number of ``informed agents'' to establish an orbit or search pattern at a particular location, and a number of ``uninformed agents'' (e.g. no GPS) can be deployed to augment the coverage. Lastly, system stability can be analyzed in a tiered fashion, since agents in the branches are only influenced by other agents which are ``higher'' on the pursuit graph (in the sense of being closer to the cycle).  

The cycle-with-branches pursuit graph is the fundamental unit necessary to describe behavior in a population of agents which each pursue one other agent, since the pursuit graph of such a population can be represented by the union of a set of cycle-with-branches pursuit graphs which are each weakly connected. This pursuit graph can also be used to describe the behavior of ``remora'' agents which attach themselves to a collective, possibly for the purpose of identifying the states and control parameters employed by the agents in the collective, in the spirit of the work presented in \cite{Galloway2016StateSystems}.
%
%
%
%
%

%
%
%
\section{Modeling Pursuit Interactions}
\label{sec:model}
%
\subsection{Agents as Self-steering Particles}
%
As described in \cite{Justh_PSK_SCL04}, we view autonomous agents as self-steering particles moving on a plane. By letting ${\bf r}_i \in \mathds{R}^2$ denote the position and a unit vector ${\bf x}_{i}$ denote the normalized velocity of agent $i$, its dynamics can be expressed as
\begin{equation}
\dot{\bf r}_i  = {\bf x}_i,
\quad 
\dot{\bf x}_i  = u_i {\bf y}_i,
\quad \textrm{and} \quad 
\dot{\bf y}_i = -u_i {\bf x}_i,
\label{eqn:particleDynamics}
\end{equation}
where ${\bf y}_{i} ={\bf x}_{i}^{\perp}$ is obtained by rotating ${\bf x}_{i}$ by $\pi/2$ in the counter-clockwise direction and $u_i \in \mathds{R}$ denotes its steering (curvature) control. Here we have assumed that the agents travel at a common nonzero speed at any given point of time.

We consider a directed graph, i.e. the pursuit graph (alternatively known as the \emph{attention graph}), $ {\cal G} = ({\cal N},{\cal A})$ with node set $ {\cal N} = \{1,2,...,n\} $ and arc set $ {\cal A} $, where the nodes correspond to the individual agents and the arcs $(i,j) \in {\cal A} $ imply that agent $i$ is paying attention to agent $j$. The analysis in  \cite{Galloway2013SymmetryStrategies} demonstrated a symmetry reduction to \emph{shape space}, which removes the reference to an absolute coordinate frame and instead describes only the relative positions and velocities of the agents. The same work also introduced a polar parametrization which is particularly useful for describing the interactions between agents, which we review here. Letting $ R(\beta) $ denote the $ 2 \times 2 $ rotation matrix
\begin{align}
R(\beta) = \left[ \begin{array}{c c} \cos\beta & -\sin\beta \\ \sin\beta & \cos\beta \end{array}\right],
\end{align}
we define the shape variables $ \kappa_{ij} $, $ \theta_{ji} $, and $ \rho_{ij} $ by
\begin{equation}
\begin{aligned}
& 
R(\kappa_{ij}){\bf x}_i \cdot \frac{{\bf r}_j-{\bf r}_i}{|{\bf r}_j-{\bf r}_i|} = 1, 
\quad 
R(\theta_{ji}){\bf x}_j \cdot \frac{{\bf r}_i-{\bf r}_j}{|{\bf r}_i-{\bf r}_j|} = 1, 
\\
& 
\rho_{ji} = |{\bf r}_j-{\bf r}_i|,
\end{aligned}
\label{shapeVariableDefinitions}
\end{equation}
for $ (i,j) \in \cal{A} $, provided $ {\bf r}_i \ne {\bf r}_j $ (see Fig~\ref{shape_var_fig}). Using this parameterization, the shape dynamics can be expressed as
\begin{equation}
\begin{aligned}
\dot{\kappa}_{ij} &= -u_i + \frac{1}{\rho_{ij}}(\sin\kappa_{ij}+ \sin\theta_{ji}),
\\
\dot{\theta}_{ji} &= - u_j + \frac{1}{\rho_{ij}}(\sin\kappa_{ij}+ \sin\theta_{ji}),
\\
\dot{\rho}_{ij} &= -\cos\kappa_{ij}-\cos\theta_{ji},
\end{aligned}
\label{dynamics_ij}
\end{equation}
provided $ \rho_{ij} > 0 $. The requirement that $\rho_{ij}$ be strictly positive is a result of the polar singularity and is necessary to ensure that particular control laws we will define in follow-on sections are well-defined. Note however that the dynamics do not necessarily enforce this constraint. We note also that cyclic interactions incur an additional cycle closure constraint, which will be made explicit in what follows.
\begin{figure}
\centering
\includegraphics[width=.37\textwidth]{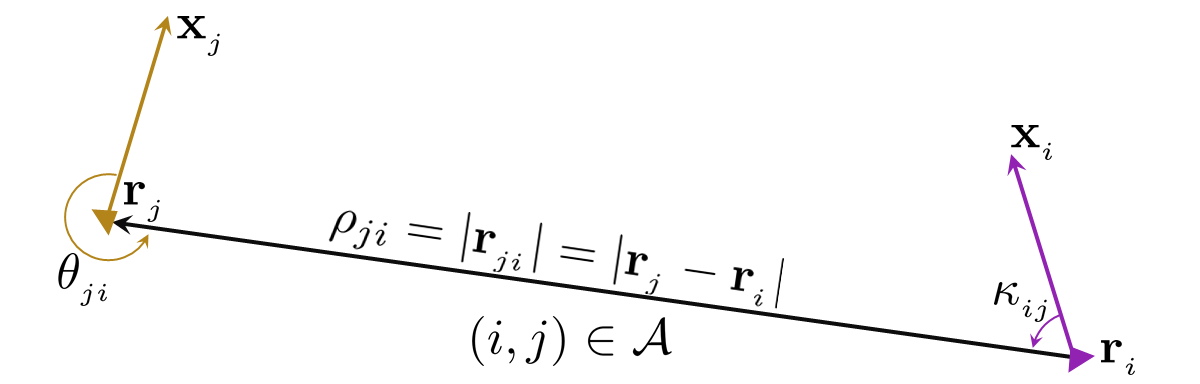}
\caption{\editKSG{Illustration} of scalar shape variables $\kappa_{ij}$, $\theta_{ji}$, $\rho_{ij}$.}
\label{shape_var_fig}
\vspace{-1.5em}
\end{figure}
%
%

%
%
\subsection{Constant Bearing Pursuit}
%
In the current work we consider the constant bearing (CB) pursuit strategy, which specifies that the pursuer should maneuver to maintain a specified angle $\alpha$ between its velocity vector and the line-of-sight to the pursuee. For a pursuer $i$ pursuing agent $j$, we can specify the CB pursuit strategy in terms of shape variables by defining 
\begin{align}
\label{cbcostshape}
\Lambda_i = -\cos(\kappa_{ij} - \alpha_i),
\end{align}
so that the CB pursuit strategy is attained if and only if $\Lambda_i = -1$. A feedback control law to achieve CB pursuit (originally developed in \cite{Wei2009PursuitGame}) is given by
\begin{align}
\label{ucbalphaishape}
u_{CB(\alpha_i)} = \mu_i \sin(\kappa_{ij}-\alpha_i) + \frac{1}{\rho_{ij}}\left(\sin\kappa_{ij}+\sin\theta_{ji}\right),
\end{align}
where $\mu_i>0$ is a control gain. If every agent $i$ in a given system pursues exactly one other agent $j$ with the CB pursuit law \eqref{ucbalphaishape} (with CB parameter $\alpha_i$), then we have the closed-loop shape dynamics
\begin{equation}
\begin{aligned}
\dot{\kappa}_{ij} &= -\mu_i \sin(\kappa_{ij}-\alpha_i), 
\\
\dot{\theta}_{ji} &= -\mu_j \sin(\kappa_{jk}-\alpha_j) -\frac{1}{\rho_{jk}}(\sin\kappa_{jk} + \sin\theta_{kj}) 
\\ 
& \qquad \quad + \frac{1}{\rho_{ij}}(\sin\kappa_{ij}+ \sin\theta_{ji}),
\\
\dot{\rho}_{ij} &= -\cos\kappa_{ij}-\cos\theta_{ji}, 
\end{aligned}
\label{closedloopcb}
\end{equation}
for $ i = 1,2,...,n $, where $ (i,j), (j,k) \in {\cal A} $.  

The dynamics \eqref{closedloopcb} possess the interesting characteristic of rendering a certain manifold invariant and attractive, in the following sense. If we define the \editKSG{CB pursuit manifold} $M_{CB(\pmb{\alpha)}}$ as
the set of all states for which each agent has attained the specified CB pursuit strategy (i.e. $\Lambda_i=-1$), trajectories which start on $M_{CB(\pmb{\alpha)}}$ will remain on the manifold for all future time, and under certain assumptions (see \cite{Galloway2013SymmetryStrategies} for more details) system trajectories which begin in a large subset of the shape space asymptotically approach $M_{CB(\pmb{\alpha)}}$ as $t \rightarrow \infty$. Thus we find it useful to consider the dynamics \eqref{closedloopcb} restricted to $M_{CB(\pmb{\alpha)}}$, which are given by
\begin{equation}
\begin{aligned}
\dot{\theta}_{ji} 
&= 
-\frac{1}{\rho_{jk}}(\sin\alpha_j + \sin\theta_{kj}) +  \frac{1}{\rho_{ij}}(\sin\alpha_i+ \sin\theta_{ji}),
\\
\dot{\rho}_{ij} 
&= 
-\cos\alpha_i- \cos\theta_{ji}.
\end{aligned}
\label{invarmandyn}
\end{equation}
%
%
%

%
%
%
\section{Cyclic pursuit with a single branch}
\label{sec:BranchDynamics}
%
In this section we consider a pursuit graph which consists of a cycle with a single ``branch''. More plainly, we consider the case in which agents $1$ through $n-1$ are engaged in cyclic CB pursuit, and agent $n$ (the branch) employs CB pursuit with regards to one of the agents on the cycle. (Without loss of generality, we assume that agent $n$ pursues agent $1$). This leads us to consider the dynamics for an $n$-agent system in which each agent employs the CB pursuit feedback law, and the arc set is given by 
\begin{equation}
\label{eqn:ArcSet}
{\cal A} = {\cal A}_{cycle} \cup \left\{(n,1) \right\},
\end{equation}
where ${\cal A}_{cycle} = \left\{ (1,2), (2,3), \ldots, (n-1,1) \right\}$.
Note that agent $n$ is influenced by the behavior of agents in the cycle but does not exert any influence on them.
%
%

%
%
\subsection{Shape dynamics and relative equilibria}
%
We consider the shape dynamics restricted to $M_{CB(\pmb{\alpha)}}$, i.e. our shape dynamics are given by \eqref{invarmandyn} for $i=1,2,\ldots, n-1$, with the branch dynamics given by
\begin{align}
\dot{\theta}_{1n} 
&= 
- \frac{1}{\rho_{12}}\left(\sin \alpha_{1} + \sin \theta_{21} \right) + \frac{1}{\rho_{n1}}\left(\sin \alpha_{n} + \sin \theta_{1n} \right),
\nonumber 
\\
\dot{\rho}_{n1} 
&= 
-\cos \alpha_{n} - \cos \theta_{1n}.
\label{eqn:ThetaRhoDynamics}
\end{align}
Note that the branch dynamics \eqref{eqn:ThetaRhoDynamics} are not subject to any additional constraints (aside from the requirement $\rho_{n1}>0$), but the agents in the cycle are subject to a cycle closure constraint on the initial conditions \editKSG{(see \cite{Galloway2013SymmetryStrategies})}, given by\footnote{Note that addition in the indices should be interpreted modulo $n-1$ for agents whose arcs are in ${\cal A}_{cycle}$.}
\begin{align*}
& R\left(\sum_{i=1}^{n} (\pi + \alpha_{i} - \theta_{i,i-1})\right) = \mathds{1}, \\
& \sum_{i=1}^{n} \rho_{i,i+1} R\left(\sum_{j=1}^{i}(\pi + \alpha_j -\theta_{j,j-1})\right) = 0, \; i =1,2,\ldots,n.
\end{align*}

Equilibria for the shape \editKSG{dynamics} (\ref{invarmandyn},\ref{eqn:ThetaRhoDynamics}) correspond to \emph{relative equilibria} for the closed-loop version of the full dynamics \eqref{eqn:particleDynamics}. For a system consisting of a single cycle (without branches), \textit{Proposition~6.1} from \cite{Galloway2013SymmetryStrategies} provided conditions for existence of rectilinear and circling relative equilibria, along with the equilibrium values for the shape variables. In particular, it was demonstrated that equilibrium values for $\theta_{i,i-1}$ for agents on the cycle are given by 
\begin{itemize}
\item Rectilinear Equilibrium: $\theta_{i,i-1} = \pi + \alpha_{i-1}$, and
\item Circling Equilibrium: $\theta_{i,i-1} = \pi - \alpha_{i-1}$.
\end{itemize}

If the cycle agents are in a rectilinear equilibrium state (and therefore $\theta_{21}=\pi + \alpha_1$), it follows from \eqref{eqn:ThetaRhoDynamics} that  $\dot{\theta}_{1n} = \dot{\rho}_{n1} = 0$ if and only if $\theta_{1n} = \pi + \alpha_n$. If the cycle agents are in a circling equilibrium state (and therefore $\theta_{21}=\pi - \alpha_1$), it follows from \eqref{eqn:ThetaRhoDynamics} that  $\dot{\theta}_{1n} = \dot{\rho}_{n1} = 0$ if and only if $\theta_{1n} = \pi - \alpha_n \neq 0$ and $\frac{\rho_{n1}}{\rho_{12}} = \frac{\sin\alpha_n}{\sin\alpha_{1}}>0$.

\begin{proposition}
\label{prop:relativeEquilibrium}
Consider an $n$-agent CB pursuit system with arc set \eqref{eqn:ArcSet} evolving on $M_{CB(\pmb{\alpha)}}$ \editKSG{obeying} (\ref{invarmandyn},\ref{eqn:ThetaRhoDynamics}). 
\begin{enumerate}
	\item A rectilinear relative equilibrium exists if and only if 
	\begin{align}
	\alpha_i = \pi + \alpha_{i-1}, \; i=1,2,\ldots, n-1, 
	\end{align}
	in which case the corresponding equilibrium side lengths $\rho_{ij}$ are arbitrary (i.e., determined by initial conditions) and the equilibrium $\theta_{ji}$ values are given by 
\begin{align}
\label{eqn:equilibriumRectValues}
	\theta_{i,i-1} &= \pi + \alpha_{i-1}, \; i=1,2,\ldots, n-1, \nonumber \\
    \theta_{1n} &= \pi + \alpha_{n}. 
\end{align}
\item A circling relative equilibrium exists if and only if 
\begin{align}
\label{eqn:circPropositionCond1}
\text{i. } &\sin(\alpha_{n}) \sin(\alpha_{1}) > 0, \nonumber \\
&\sin(\alpha_{i}) \sin(\alpha_{i+1}) > 0, \; i=1,2,\ldots, n-2   \\
\label{eqn:circPropositionCond2}
\text{ii. } &\sum_{i=1}^{n-1}\left(\alpha_i \right) = 0,
\end{align}
in which case the corresponding equilibrium angles $\theta_{ji}$ are given by
\begin{equation}
\begin{aligned}
\theta_{i,i-1} &= \pi - \alpha_{i-1}, \; i=1,2,\ldots, n-1,
\\
\theta_{1n} &= \pi - \alpha_{n},
\end{aligned}
\label{eqn:equilibriumCircValues}
\end{equation}
and equilibrium side lengths $\rho_{ij}$ satisfy
\begin{equation}
\begin{aligned}
\frac{\rho_{i,i+1}}{\rho_{i+1,i+2}} 
&= \frac{\sin\alpha_{i}}{\sin\alpha_{i+1}}, \; i=1,2,\ldots, n-1,
\\
\frac{\rho_{n1}}{\rho_{12}} 
&= \frac{\sin\alpha_{n}}{\sin\alpha_{1}}.
\end{aligned}
\label{eqn:equilibriumCircRhoValues}
\end{equation}
\end{enumerate}  
\end{proposition}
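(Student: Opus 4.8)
The plan is to exploit the \emph{one-way coupling} noted just before the statement: the closed-loop cycle dynamics for agents $1,\dots,n-1$ do not involve any branch variable, so the restriction of any relative equilibrium to the cycle must itself be a relative equilibrium of the isolated $(n-1)$-agent cycle governed by \eqref{invarmandyn}. Consequently I would split the argument into a \emph{cycle part}, handled by combining the equilibrium $\theta$-values quoted from \textit{Proposition~6.1} of \cite{Galloway2013SymmetryStrategies} with the two cycle-closure constraints, and a self-contained \emph{branch part}, which imposes $\dot\theta_{1n}=\dot\rho_{n1}=0$ in \eqref{eqn:ThetaRhoDynamics}. Each case (rectilinear, circling) is then obtained by establishing both directions of the ``if and only if'': substituting the claimed values to verify they give an equilibrium, and conversely showing the equilibrium conditions force those values.

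For the cycle part I would substitute the rectilinear value $\theta_{i,i-1}=\pi+\alpha_{i-1}$ into \eqref{invarmandyn}; since then $\sin\theta_{i,i-1}=-\sin\alpha_{i-1}$ and $\cos\theta_{i,i-1}=-\cos\alpha_{i-1}$, both bracketed terms in $\dot\theta_{i,i-1}$ vanish and $\dot\rho_{i,i+1}=0$ holds for \emph{arbitrary} $\rho$, which is the source of the ``side lengths arbitrary'' claim; moreover the first closure relation has net rotation angle $\sum_i(\pi+\alpha_i-\theta_{i,i-1})=\sum_i(\alpha_i-\alpha_{i-1})=0$ automatically, so the $\alpha$-condition $\alpha_i=\pi+\alpha_{i-1}$ must emerge from the remaining vector-valued closure. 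For circling I would substitute $\theta_{i,i-1}=\pi-\alpha_{i-1}$, note that $\cos\theta_{i,i-1}=-\cos\alpha_{i-1}$ again kills $\dot\rho$, while $\dot\theta_{i,i-1}=0$ now reads $\tfrac{\sin\alpha_{i-1}}{\rho_{i-1,i}}=\tfrac{\sin\alpha_i}{\rho_{i,i+1}}$, producing the ratios in \eqref{eqn:equilibriumCircRhoValues}; positivity of these ratios forces the sign conditions in \eqref{eqn:circPropositionCond1}, and the first closure, whose summands are now $\pi+\alpha_j-\theta_{j,j-1}=\alpha_j+\alpha_{j-1}$ and telescope to $2\sum_i\alpha_i$, forces $\sum_{i=1}^{n-1}\alpha_i=0$ in \eqref{eqn:circPropositionCond2}.

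The branch part carries the genuinely new content. Imposing $\dot\rho_{n1}=0$ in \eqref{eqn:ThetaRhoDynamics} gives $\cos\theta_{1n}=-\cos\alpha_n$, i.e. $\theta_{1n}=\pi\pm\alpha_n$, and the sign is selected by $\dot\theta_{1n}=0$ after substituting the already-determined cycle value of $\theta_{21}$. In the rectilinear case the coupling term $\tfrac{1}{\rho_{12}}(\sin\alpha_1+\sin\theta_{21})$ vanishes, the branch decouples, $\sin\theta_{1n}=-\sin\alpha_n$ selects $\theta_{1n}=\pi+\alpha_n$, and $\rho_{n1}$ is left free. In the circling case this same term equals $-\tfrac{2\sin\alpha_1}{\rho_{12}}$; the choice $\theta_{1n}=\pi+\alpha_n$ would force $\sin\alpha_1=0$ and is discarded, so $\theta_{1n}=\pi-\alpha_n$ (necessarily $\neq 0$) survives, and balancing $\dot\theta_{1n}=0$ yields $\rho_{n1}/\rho_{12}=\sin\alpha_n/\sin\alpha_1$, whose positivity is exactly $\sin\alpha_n\sin\alpha_1>0$.

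I expect the main obstacle to be the careful handling of this branch coupling term, together with verifying that the closure constraints admit strictly positive side lengths. The term $\tfrac{1}{\rho_{12}}(\sin\alpha_1+\sin\theta_{21})$ is what transmits the cycle's equilibrium type into the branch, and it behaves qualitatively differently in the two regimes: it is inert for rectilinear motion but, for circling, rigidly slaves the branch length $\rho_{n1}$ to the cycle length $\rho_{12}$. Making the sign selection and the positivity requirements ($\sin\alpha_n\sin\alpha_1>0$ and $\theta_{1n}=\pi-\alpha_n\neq 0$) line up consistently with the cited cycle conditions is the delicate step; the remaining verifications are routine substitutions.
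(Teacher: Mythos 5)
Your decomposition into a cycle part (imported from \textit{Proposition~6.1} of \cite{Galloway2013SymmetryStrategies}) and a branch part obtained by zeroing \eqref{eqn:ThetaRhoDynamics} is exactly the paper's strategy, and your branch computation --- $\cos\theta_{1n}=-\cos\alpha_n$ from $\dot\rho_{n1}=0$, with the sign of $\theta_{1n}$ and the ratio $\rho_{n1}/\rho_{12}=\sin\alpha_n/\sin\alpha_1>0$ selected by $\dot\theta_{1n}=0$ after inserting the cycle's equilibrium value of $\theta_{21}$ --- is precisely the argument the paper carries out in the discussion immediately preceding the proposition statement and then references in its proof. Re-deriving the cycle conditions from \eqref{invarmandyn} and the closure constraints, rather than simply citing them, is harmless extra work.

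The one step you omit is the part on which the paper's written proof actually spends its effort: having made the branch shape dynamics vanish, one must still check that agent $n$ participates in the \emph{same} rectilinear or circling motion as the cycle, not merely in some relative equilibrium with respect to agent $1$. The paper does this by showing ${\bf x}_n\cdot{\bf x}_1=\cos(\pi+\alpha_n-\theta_{1n})=1$ at the rectilinear values (headings aligned), and, in the circling case, by expressing ${\bf x}_1$ and ${\bf x}_n$ in terms of ${\bf r}_n-{\bf r}_1$ via \eqref{shapeVariableDefinitions} and verifying that ${\bf r}_n+\frac{\rho_{12}}{2\sin\alpha_1}{\bf x}_n^\perp$ coincides with the circumcenter ${\bf r}_1+\frac{\rho_{12}}{2\sin\alpha_1}{\bf x}_1^\perp$ of the cycle's orbit, so agent $n$ lies on the same circle with the same radius. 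One can argue this is forced by rigidity of the frozen shape at common unit speed, but some such argument is needed to justify attaching the labels ``rectilinear'' and ``circling'' to the full $n$-agent collective; with that verification added, your proof matches the paper's.
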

\begin{proof}
The conditions and equilibrium values for the agents on the cycle (i.e. agents $1$ through $n-1$) are stated in \textit{Proposition~6.1} in \cite{Galloway2013SymmetryStrategies}. For the branch agent, it follows from the discussion immediately preceding the proposition statement that the value for $\theta_{1n}$ given in \eqref{eqn:equilibriumRectValues} will also set the branch dynamics \eqref{eqn:ThetaRhoDynamics} to zero. From \eqref{shapeVariableDefinitions} one can show (see \cite{Galloway2013SymmetryStrategies}) that on $M_{CB(\pmb{\alpha)}}$ we have
\begin{equation}
{\bf x}_{n} \cdot {\bf x}_{1} = \cos(\pi + \alpha_{n} - \theta_{1n}),
\end{equation}
and therefore substitution of the value for $\theta_{1n}$ given in \eqref{eqn:equilibriumRectValues} results in ${\bf x}_{n} \cdot {\bf x}_{1} =1$, i.e. the tangent vectors are aligned in a rectilinear equilibrium.

For the second case where the cycle agents orbit on a circling equilibrium (per \textit{Proposition~6.1} of \cite{Galloway2013SymmetryStrategies}), the discussion preceding the proposition statement establishes that the values for $\theta_{1n}$ and $\rho_{1n}$ given in \eqref{eqn:circPropositionCond1} and \eqref{eqn:circPropositionCond2} result in setting the branch dynamics \eqref{eqn:ThetaRhoDynamics} to zero. It remains to establish that the branch agent will orbit on the \textit{same} circling equilibrium as agents $1$ through $n-1$. As demonstrated in \cite{Galloway2013SymmetryStrategies}, that circling equilibrium has radius given by $\frac{\rho_{12}}{2\sin\alpha_{1}}$ and circumcenter located at ${\bf r}_{1} + \frac{\rho_{12}}{2\sin\alpha_{1}}{\bf x}_{1}^\perp$, so it suffices to show that ${\bf r}_{n} + \frac{\rho_{12}}{2\sin\alpha_{1}}{\bf x}_{n}^\perp = {\bf r}_{1} + \frac{\rho_{12}}{2\sin\alpha_{1}}{\bf x}_{1}^\perp$. From \eqref{shapeVariableDefinitions}, we have
\begin{align*}
{\bf x}_1 
&= 
R(-\theta_{1n})\left(\frac{{\bf r}_{n}-{\bf r}_1}{\left\|{\bf r}_{n}-{\bf r}_1\right\|}\right) = \frac{1}{\rho_{n1}}R(\alpha_{n}-\pi)({\bf r}_{n}-{\bf r}_1),
\\
{\bf x}_{n} 
&= 
-R(-\alpha_{n})\left(\frac{{\bf r}_{n}-{\bf r}_1}{\left\|{\bf r}_{n}-{\bf r}_1\right\|}\right) = -\frac{1}{\rho_{n1}}R(-\alpha_{n})({\bf r}_{n}-{\bf r}_1).  
\end{align*}
Then, following the calculations in \cite{Galloway2013SymmetryStrategies} we can show that
\begin{align*}
&\left({\bf r}_{n} + \frac{\rho_{12}}{2\sin\alpha_{1}}{\bf x}_{n}^\perp \right) - 
\left({\bf r}_{1} + \frac{\rho_{12}}{2\sin\alpha_{1}}{\bf x}_{1}^\perp \right) \nonumber \\
&\quad = \left[\mathds{1}-\frac{1}{2\sin\alpha_{n}} 2\cos\left(\frac{\pi}{2}-\alpha_{n}\right)\mathds{1}\right]\left({\bf r}_{n}-{\bf r}_1\right) 
\end{align*}
which equals 0, establishing the claim.
\end{proof}
\begin{remark}
Note that the only condition in Proposition \ref{prop:relativeEquilibrium} which depends on the CB parameter for the branch agent (i.e. $\alpha_n$) is \eqref{eqn:circPropositionCond1}, which essentially requires that all the agents follow the same direction of rotation around the circling equilibrium (i.e. clockwise vs. counter-clockwise). In particular, we note that $\alpha_n$ is not required to satisfy the more stringent requirement given by \eqref{eqn:circPropositionCond2}, which means a branch agent can join or leave a circling equilibrium without disturbing (or coordinating with) the other agents. The branch agent's relative position on the circling equilibrium can be modified independently by adjusting the value of $\alpha_n$.  
\end{remark}

\editKSG{Examples of relative equilibria are} depicted in Fig~\ref{fig:threeParticleFigs}a and Fig~\ref{fig:threeParticleFigs}b. Note that in both cases agents 1, 2 and 3 engage in cyclic pursuit on a three-agent relative equilibrium, and agent 4 essentially joins the relative equilibrium, with the value of $\alpha_4$ determining agent \editKSG{4}'s equilibrium position with respect to the other agents.
%
%

%
%
\subsection{Pure shape equilibria}
%
As demonstrated in \cite{Galloway2013SymmetryStrategies}, we can separate the shape dynamics (\ref{invarmandyn},\ref{eqn:ThetaRhoDynamics}) into two parts by using an appropriate re-parametrization of $M_{CB(\pmb{\alpha)}}$ and a subsequent rescaling of the time variable. After this separation, one part describes evolution of the \emph{size}/\emph{scale} of the system, and the other one describes the \emph{pure shape} (i.e. the formation shape up to geometric similarity). Towards this objective we first define the following change of variables
\begin{displaymath}
\lambda \triangleq \ln(\rho_{12}),
\quad \textrm{and} \quad
\tilde{\rho}_{ij} \triangleq \rho_{ij}/\rho_{12} = \rho_{ij}e^{-\lambda}
\end{displaymath}
for every $(i,j) \in \cal{A}$ from \eqref{eqn:ArcSet}. Then, by introducing a time-rescaling defined as
\begin{equation}
\tau = \int_{0}^{t} e^{-\lambda(\sigma)} d\sigma,
\label{eqn:firstTauDefn}
\end{equation}
the shape dynamics (\ref{invarmandyn},\ref{eqn:ThetaRhoDynamics}) can be expressed as
\begin{align}
\lambda^{'} 
&= 
-  \left(\cos\alpha_1 + \cos\theta_{21}\right)
\label{eqn:lambdaPrime}
\\
\theta_{1n}^{'} 
&= 
\frac{1}{\tilde{\rho}_{n1}}\left(\sin\alpha_{n} + \sin\theta_{1n}\right) - \left(\sin\alpha_{1} + \sin\theta_{21}\right)
\label{eqn:thetaPrime}
\\
\tilde{\rho}_{n1}^{'} 
&= 
\tilde{\rho}_{n1}\left(\cos\alpha_1 + \cos\theta_{21}\right) -\left(\cos\alpha_n + \cos\theta_{1n}\right)
\label{eqn:rhoTildePrime}
\\
\theta_{i,i-1}^{'} 
&= 
\frac{1}{\tilde{\rho}_{i-1,i}}\left(\sin\alpha_{i-1} + \sin\theta_{i,i-1}\right)
\nonumber
\\
&\qquad  
- \frac{1}{\tilde{\rho}_{i,i+1}}\left(\sin\alpha_{i} + \sin\theta_{i+1,i}\right)
\label{eqn:thetaPrimeCycle}
\\
\tilde{\rho}_{i,i+1}^{'} 
&= 
\tilde{\rho}_{i,i+1}\left(\cos\alpha_1 + \cos\theta_{21}\right)
\nonumber
\\
& \qquad 
-\left(\cos\alpha_i + \cos\theta_{i+1,i}\right)
\label{eqn:rhoTildePrimeCycle}
\end{align}
for $i=1,\ldots, n-1$, where the \emph{prime} notation denotes differentiation with respect to the rescaled time $\tau$. It is worth noting here that \eqref{eqn:lambdaPrime} describes the evolution of size of the system, while the dynamics \eqref{eqn:thetaPrime}-\eqref{eqn:rhoTildePrimeCycle} are now self-contained and describe the evolution of pure shape. Equilibria for the dynamics \eqref{eqn:thetaPrime}-\eqref{eqn:rhoTildePrimeCycle} are known as \emph{pure shape equilibria}, and correspond to system trajectories which preserve the pure shape of the system while (possibly) evolving in size (e.g. the spiral motion shown in Fig~\ref{fig:threeParticleFigs}c).
%
%
%
%
\begin{figure}[b!]
\centering
$\begin{array}{cc}
\includegraphics[height=.2\textwidth]{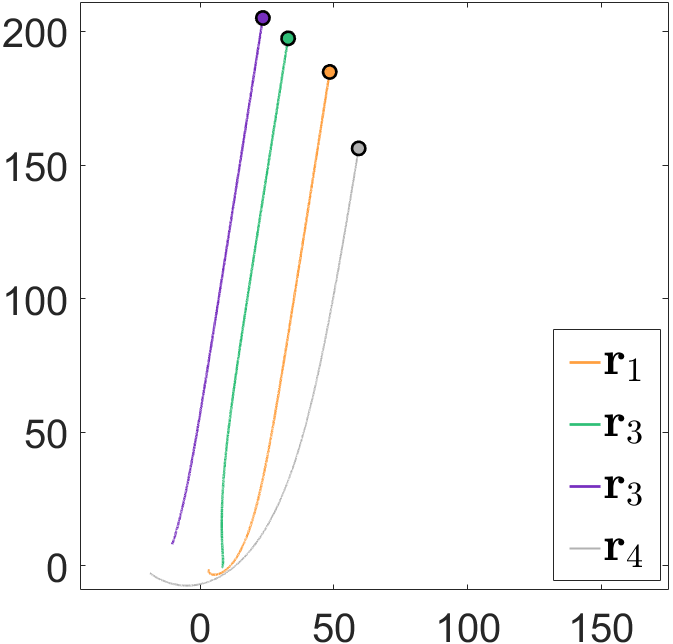}
&
\includegraphics[height=.2\textwidth]{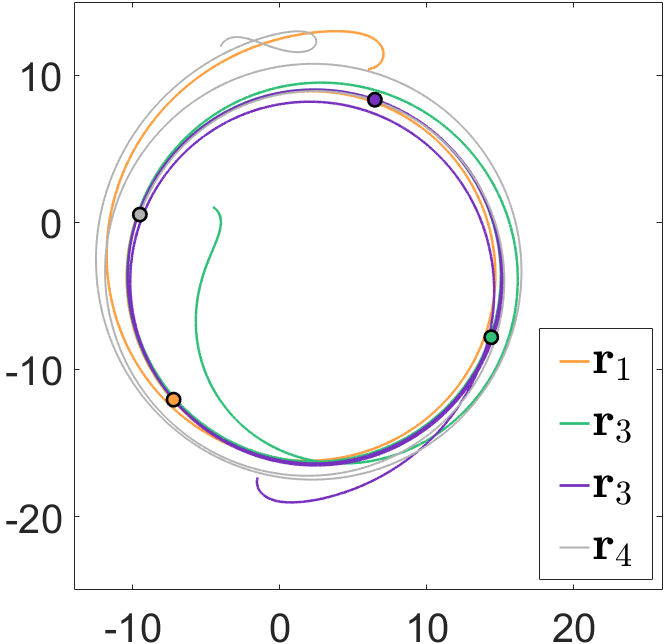}
\\
\textrm{\small{(a): Rectilinear Equilibrium}} & \textrm{\small{(b): Circling Equilibrium}}
\\
\includegraphics[height=.2\textwidth]{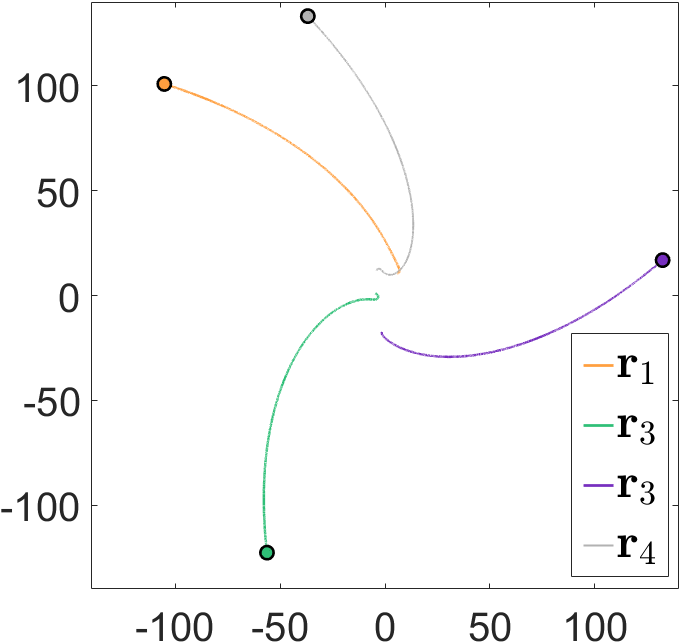}
&
\includegraphics[height=.2\textwidth]{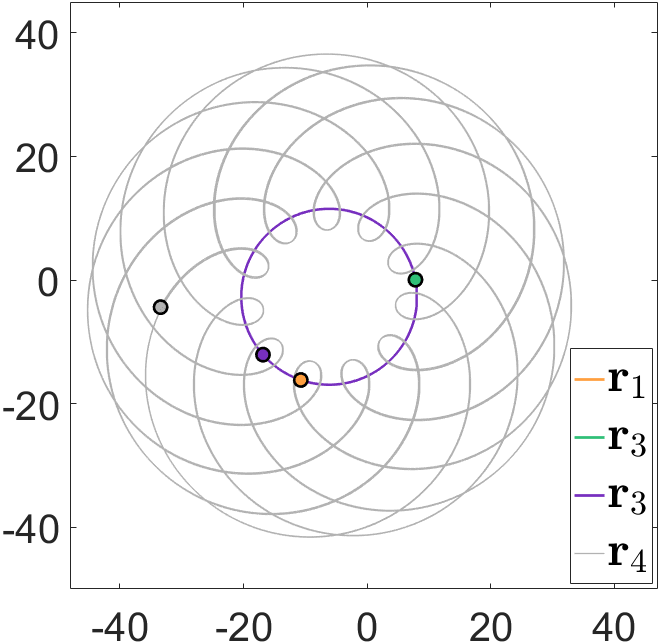}
\\
\textrm{\small{(c): Shape Preserving Spiral}} & \textrm{\small{(d): Periodic Orbit}}
\end{array}$
\caption{\small{This figure illustrates MATLAB simulations for a 4-agent collective, wherein agent 1, 2 and 3 are engaged in a cyclic pursuit, and agent 4 pursues agent 1. Depending on parameter choices the outcome can be different: (a) rectilinear equilibrium ($\alpha_1 = \alpha_2 - \pi = \alpha_3 = \pi/3$, $\alpha_4 = \pi/6$); (b) circling equilibrium ($\alpha_1 =\pi/3$, $\alpha_2=\pi/4$, $\alpha_3 = 5\pi/12$, $\alpha_4 = \pi/6$); (c) shape preserving spirals ($\alpha_1 = \alpha_2 = \alpha_3 = 2\pi/3$, $\alpha_4 = 2\pi/5$); or (d) periodic motion paths ($\alpha_1 =\pi/3$, $\alpha_2=7\pi/12$, $\alpha_3=\pi/12$, $\alpha_4=\pi/2$).}}
\label{fig:threeParticleFigs}
\vspace{-1.5em}
\end{figure}
\begin{proposition}
\label{prop:pureShapeEquilibria}
Consider an $n$-agent CB pursuit system with arc set \eqref{eqn:ArcSet} evolving on $M_{CB(\pmb{\alpha)}}$ according to the shape dynamics \eqref{eqn:lambdaPrime}-\eqref{eqn:rhoTildePrimeCycle}. Pure shape equilibria exist if and only if the conditions of \textit{Proposition~\ref{prop:relativeEquilibrium}} are met, or there exists an \editKSG{integer} $k \in \{0,1,\ldots,n-2 \}$ such that 
\begin{displaymath}
\begin{aligned}
&\sin(\alpha_n - \tau_k)\sin(\alpha_{1}-\tau_k) > 0, 
\\
&\sin(\alpha_i - \tau_k)\sin(\alpha_{i+1}-\tau_k) > 0, \; i=1,2,\ldots,n-2,
\end{aligned}
\end{displaymath}
\editKSG{for $\tau_k \triangleq \left(\sum_{i=1}^{n-1}(\alpha_i/(n-1))\right)-k\pi/(n-1)$, with equilibrium values for the branch agent given by} 
\begin{equation}
\theta_{1n} = \pi - \alpha_n + 2\tau_k, 
\quad \textrm{and,} \quad 
\tilde{\rho}_{n1} = \frac{\sin(\alpha_n - \tau_k)}{\sin(\alpha_1 - \tau_k)}.
\label{eqn:pureShapeEquilibriumValues}
\end{equation}
\end{proposition}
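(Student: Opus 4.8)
The plan is to exploit the one-way coupling in the pursuit graph: agent $n$ is influenced by the cycle but exerts no influence on it, so the cycle equations \eqref{eqn:thetaPrimeCycle}--\eqref{eqn:rhoTildePrimeCycle} form a self-contained subsystem. Consequently the full system sits at a pure shape equilibrium if and only if (a) the cycle agents $1,\ldots,n-1$ are at a pure shape equilibrium of the cyclic subsystem, and (b) the branch equations $\theta_{1n}'=0$ and $\tilde\rho_{n1}'=0$ hold simultaneously. For part (a) I would invoke the cyclic pure shape analysis of \cite{Galloway2013SymmetryStrategies}, which establishes that such equilibria are either the relative equilibria of Proposition~\ref{prop:relativeEquilibrium} or a spiraling family indexed by $k\in\{0,\ldots,n-2\}$, at which $\theta_{i,i-1}=\pi-\alpha_{i-1}+2\tau_k$ and the sign conditions $\sin(\alpha_i-\tau_k)\sin(\alpha_{i+1}-\tau_k)>0$ hold. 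In particular this fixes $\theta_{21}=\pi-\alpha_1+2\tau_k$, which is the only cycle quantity entering the branch equations.

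With the cycle value of $\theta_{21}$ in hand, the remaining work is a direct solve of \eqref{eqn:thetaPrime}--\eqref{eqn:rhoTildePrime} set to zero. Writing these as
\begin{align*}
\sin\alpha_n + \sin\theta_{1n} &= \tilde\rho_{n1}\left(\sin\alpha_1 + \sin\theta_{21}\right), \\
\cos\alpha_n + \cos\theta_{1n} &= \tilde\rho_{n1}\left(\cos\alpha_1 + \cos\theta_{21}\right),
\end{align*}
and eliminating $\tilde\rho_{n1}$ by taking the ratio (valid whenever $\cos\alpha_1+\cos\theta_{21}\neq 0$) yields, via the identity $(\sin A+\sin B)/(\cos A+\cos B)=\tan\tfrac{A+B}{2}$,
\begin{displaymath}
\tan\frac{\alpha_n + \theta_{1n}}{2} = \tan\frac{\alpha_1 + \theta_{21}}{2}.
\end{displaymath}
Substituting $\theta_{21}=\pi-\alpha_1+2\tau_k$ and selecting the principal branch gives $\theta_{1n}=\pi-\alpha_n+2\tau_k$, matching \eqref{eqn:pureShapeEquilibriumValues}.

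To recover $\tilde\rho_{n1}$ I would return to the cosine equation and apply the sum-to-product identity $\cos X-\cos Y=-2\sin\tfrac{X+Y}{2}\sin\tfrac{X-Y}{2}$ to both $\cos\alpha_n+\cos\theta_{1n}=\cos\alpha_n-\cos(\alpha_n-2\tau_k)$ and $\cos\alpha_1+\cos\theta_{21}=\cos\alpha_1-\cos(\alpha_1-2\tau_k)$. Each evaluates to $-2\sin(\,\cdot\,-\tau_k)\sin\tau_k$, so the common factor $-2\sin\tau_k$ cancels and leaves $\tilde\rho_{n1}=\sin(\alpha_n-\tau_k)/\sin(\alpha_1-\tau_k)$, again matching \eqref{eqn:pureShapeEquilibriumValues}; a parallel computation with the sine equation confirms consistency. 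Since the shape dynamics require $\rho_{n1}>0$ and hence $\tilde\rho_{n1}>0$, this closed-form ratio is admissible precisely when $\sin(\alpha_n-\tau_k)\sin(\alpha_1-\tau_k)>0$, which is the branch sign condition in the statement.

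The main obstacle lies less in the branch computation — which is the routine elimination above — than in cleanly establishing the ``only if'' direction and the interface with the cyclic result. I would argue necessity by noting that the self-contained cycle subsystem forces the cycle agents onto one of the equilibria classified in \cite{Galloway2013SymmetryStrategies}, after which the branch equations determine $(\theta_{1n},\tilde\rho_{n1})$ uniquely on the admissible range, with positivity selecting the sign condition. The one case requiring separate care is the degenerate situation $\sin\tau_k=0$, where $\cos\alpha_1+\cos\theta_{21}=0$, the ratio step is invalid, and $\lambda'=0$; here the equilibrium is in fact a circling relative equilibrium already covered by the ``conditions of Proposition~\ref{prop:relativeEquilibrium}'' alternative, so it need not be double-counted in the spiraling family.
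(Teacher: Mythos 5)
Your proposal is correct and follows essentially the same route as the paper: invoke the cyclic pure-shape result of \cite{Galloway2013SymmetryStrategies} for agents $1$ through $n-1$ (which fixes $\theta_{21}=\pi-\alpha_1+2\tau_k$ and supplies the sign conditions on the cycle), then solve the two branch equations for $(\theta_{1n},\tilde{\rho}_{n1})$, with positivity of $\tilde{\rho}_{n1}$ producing the remaining sign condition and the $\cos\alpha_1+\cos\theta_{21}=0$ degeneracy folded back into the relative-equilibrium case. The only difference is organizational --- you eliminate $\tilde{\rho}_{n1}$ via the tangent half-angle ratio before recovering it from the cosine equation, whereas the paper equates the two expressions for $\tilde{\rho}_{n1}$ obtained from each branch equation and handles the $\sin\alpha_i+\sin\theta_{i+1,i}=0$ subcase separately; your cosine-based recovery absorbs that subcase uniformly.
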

\begin{proof}
A pure shape equilibrium corresponds to trajectories for which \eqref{eqn:thetaPrime}-\eqref{eqn:rhoTildePrimeCycle} are zero. If $\cos \alpha_i + \cos \theta_{i,i+1}=0$ for any $i$, then setting \eqref{eqn:thetaPrime}-\eqref{eqn:rhoTildePrimeCycle} to zero requires $\cos \alpha_1 + \cos \theta_{21}=0$, which corresponds to a relative equilibrium (already covered by \textit{Proposition~\ref{prop:relativeEquilibrium}}). Therefore we proceed by assuming $\cos \alpha_i + \cos \theta_{i,i+1} \neq 0$ for any $i$. 

For the agents in the cycle (i.e. agents $1$ through $n-1$), \textit{Proposition~6.3} of \cite{Galloway2013SymmetryStrategies} provides necessary and sufficient conditions for the existence of pure shape equilibria which are not relative equilibria, \editKSG{which are represented by the second condition in our current proposition.}
\editKSG{The referenced proposition from \cite{Galloway2013SymmetryStrategies}} also allows us to express the equilibrium values for pure shape equilibria on $M_{CB(\pmb{\alpha)}}$, and in particular we have
\begin{align}
\theta_{21}=\pi - \alpha_1 +2 \tau_k.
\end{align}
By substitution into \eqref{eqn:rhoTildePrime}, we have $\tilde{\rho}_{n1}^{'} = 0$ if and only if
\begin{align}
\label{eqn:pureShapeCondition1}
\tilde{\rho}_{n1} = \frac{\cos \alpha_n +\cos \theta_{1n} }{\cos \alpha_1 +\cos \theta_{21} } 
= \frac{\cos \alpha_n +\cos \theta_{1n} }{-2\sin(\tau_k)\sin (\alpha_1 -\tau_k) }.
\end{align}
If we further assume that $\sin \alpha_i +\sin \theta_{i+1,i} \neq 0$ for every $i$, from \eqref{eqn:thetaPrime} we observe that $\theta_{1n}^{'} = 0$ if and only if
\begin{align}
\label{eqn:pureShapeCondition2}
\tilde{\rho}_{n1} = \frac{\sin \alpha_n +\sin \theta_{1n} }{\sin \alpha_1 +\sin \theta_{21} } 
= \frac{\sin \alpha_n +\sin \theta_{1n} }{2\cos(\tau_k)\sin (\alpha_1 -\tau_k) }.
\end{align} 
Equating \eqref{eqn:pureShapeCondition1} with \eqref{eqn:pureShapeCondition2} and employing appropriate trigonometric identities, \editKSG{we have the equilibrium values for the branch shape dynamics given by} \eqref{eqn:pureShapeEquilibriumValues}, with the requirement that $\sin(\alpha_n - \tau_k)\sin(\alpha_1 - \tau_k) > 0$.

Lastly, we address the case where $\sin \alpha_i +\sin \theta_{i+1,i} = 0$ for every $i$. (Note that setting \eqref{eqn:thetaPrimeCycle} to zero precludes the situation where $\sin \alpha_i +\sin \theta_{i+1,i} = 0$ for only some $i$.) Since $\cos \alpha_i + \cos \theta_{i,i+1} \neq 0$, the only possibility is that $\theta_{i+1,i} = -\alpha_i$, for every $i$. This situation is addressed in the supplementary material of \cite{Galloway2013SymmetryStrategies}, where it is shown that this is a particular case of \eqref{eqn:pureShapeEquilibriumValues} for which $\tau_k = \pm \pi/2$. 
\end{proof}
%

%
%
\subsection{A Special Case: Periodic Orbits in the Pure Shape Space}
%
Now we focus on a special case, and show existence of interesting behavior (as shown in Fig~\ref{fig:threeParticleFigs}d) on the pure shape space. If the conditions \eqref{eqn:circPropositionCond1}-\eqref{eqn:circPropositionCond2} hold true, i.e. a circling equilibrium exists, and $\alpha_n = \pi/2$, we note that the dynamics \eqref{eqn:thetaPrime}-\eqref{eqn:rhoTildePrime} simplify to 
\begin{equation}
\begin{aligned}
\theta_{1n}^{'} &= \frac{1}{\tilde{\rho}_{n1}}\left(1 + \sin\theta_{1n}\right) - 2\sin\alpha_1, \\
\tilde{\rho}_{n1}^{'} &= -\cos\theta_{1n},
\end{aligned}
\label{eqn:halfPiDynamics}
\end{equation}
defined on the set $\{(\theta_{1n},\tilde{\rho}_{n1}): -\pi < \theta_{1n} \leq \pi, \tilde{\rho}_{n1}>0 \}$. Whenever $\sin\alpha_1 > 0$, this dynamics \eqref{eqn:halfPiDynamics} has a unique equilibrium point at $(\theta_{1n},\tilde{\rho}_{n1}) = (\pi/2,1/\sin\alpha_1)$, and one can easily show that the eigenvalues of the linearized dynamics are given by $\lambda = \pm j \sqrt{2}\sin(\alpha_1)$. This suggests the possibility of periodic orbits, and in what follows, we will demonstrate that the system does in fact admit periodic orbits, even when $\sin\alpha_1 < 0$.

Then, by adopting an approach similar to (\cite{Mischiati2012894, Udit_CDC16}), we show that the solutions of \eqref{eqn:halfPiDynamics} constitute periodic orbits in the underlying state space. Towards this objective, we first introduce some relevant definitions and the following theorem due to G. D. Birkhoff \cite{Birkhoff_Original}.

%
%
%
%

%
%
\begin{definition}[Involution]
A diffeomorphism $F:\mathcal{M} \rightarrow \mathcal{M}$ defined on a manifold $\mathcal{M}$ is an \textit{involution} if $F \neq id_{\mathcal{M}}$, the identity diffeomorphism, and $F\big(F(m)\big) = m, \; \forall m \in \mathcal{M}$.
\end{definition}
\begin{definition}[F-reversibility]
A vector field $\mathfrak{X}$ defined on a manifold $\mathcal{M}$ is said to be $F$-reversible if there exists an involution $F$ such that its pushforward $F_*\mathfrak{X} = -\mathfrak{X}$.
\end{definition}
\begin{theorem}[G. D. Birkhoff \cite{Birkhoff_Original}]
Let $\mathfrak{X}$ be a $F$-reversible vector field on $M$ and $\Sigma_F$ denote the fixed-point set of the reverser $F$. If an orbit of $\mathfrak{X}$ through a point of $\Sigma_F$ intersects $\Sigma_F$ at another point, then it is periodic.
\end{theorem}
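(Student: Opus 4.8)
The plan is to reduce the statement to a short flow-symmetry computation whose geometric content is that $\Sigma_F$ behaves like a mirror for the dynamics: an orbit that meets this mirror twice is forced to close up. Throughout, denote by $\phi_t$ the (local) flow of $\mathfrak{X}$, so that $\frac{d}{dt}\phi_t(m) = \mathfrak{X}(\phi_t(m))$ and $\phi_0 = id_{\mathcal{M}}$.

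First I would establish the key intertwining relation $F \circ \phi_t = \phi_{-t} \circ F$. To see this, fix $m \in \mathcal{M}$ and consider the curve $\gamma(t) = F(\phi_t(m))$. Differentiating and using the $F$-reversibility hypothesis $F_*\mathfrak{X} = -\mathfrak{X}$ (equivalently $dF \circ \mathfrak{X} = -\mathfrak{X}\circ F$), one gets $\dot\gamma(t) = dF\big(\mathfrak{X}(\phi_t(m))\big) = -\mathfrak{X}(\gamma(t))$, so $\gamma$ solves the time-reversed equation $\dot x = -\mathfrak{X}(x)$ with initial condition $\gamma(0) = F(m)$. Since the flow of $-\mathfrak{X}$ is exactly $t \mapsto \phi_{-t}$, uniqueness of integral curves forces $\gamma(t) = \phi_{-t}(F(m))$, which is the claimed relation.

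Next I would exploit the two intersection points. Let $p \in \Sigma_F$ be the point through which the orbit passes, so $F(p) = p$, and let $q \neq p$ be the second orbit point lying in $\Sigma_F$, so $F(q)=q$ and $q = \phi_T(p)$ for some $T$ with $T \neq 0$ (the inequality $q\neq p$ rules out $T=0$, and in particular shows $p$ is not an equilibrium). Applying the intertwining relation together with the fixed-point property gives $q = F(q) = F\big(\phi_T(p)\big) = \phi_{-T}(F(p)) = \phi_{-T}(p)$. Combining this with $q = \phi_T(p)$ yields $\phi_T(p) = \phi_{-T}(p)$, hence $\phi_{2T}(p) = p$ with $2T \neq 0$, which is precisely periodicity of the orbit.

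I expect the only genuinely delicate points to be (i) the justification of the flow-reversal identity $F \circ \phi_t = \phi_{-t}\circ F$, which rests on uniqueness of integral curves and is where the reversibility hypothesis is actually consumed, and (ii) some bookkeeping on the domain of the flow—one must know $\phi_t(p)$ is defined on a time interval containing both $T$ and $-T$, which is automatic once the orbit segment from $p$ to $q$ exists. No estimates are needed; everything else is the elementary algebra displayed above.
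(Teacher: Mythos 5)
Your argument is correct: the intertwining relation $F\circ\phi_t=\phi_{-t}\circ F$ follows from $F_*\mathfrak{X}=-\mathfrak{X}$ by uniqueness of integral curves, and combining $q=\phi_T(p)$ with $q=F(q)=\phi_{-T}(F(p))=\phi_{-T}(p)$ gives $\phi_{2T}(p)=p$ with $2T\neq 0$. Note, however, that the paper does not prove this statement at all --- it is quoted as a classical result of Birkhoff with a citation --- so there is no in-paper proof to compare against; what you have written is the standard reflection argument for reversible systems, and it is complete, including the observation that the domain of $\phi_{-t}(p)$ is supplied by the relation itself since $F$ is globally defined.
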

\begin{lemma}
The vector field defined by \eqref{eqn:halfPiDynamics} is $F$-reversible, with the reverser $F(\theta_{1n},\tilde{\rho}_{n1}) = (\pi - \theta_{1n},\tilde{\rho}_{n1})$.
\end{lemma}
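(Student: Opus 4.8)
The plan is to verify directly the two requirements in the definition of $F$-reversibility: that $F$ is a genuine involution of the manifold underlying \eqref{eqn:halfPiDynamics}, and that its pushforward satisfies $F_*\mathfrak{X} = -\mathfrak{X}$. Since both the vector field and the candidate reverser are given in closed form, the argument collapses to a brief computation; the only care needed is with the underlying state space and with the pushforward convention.

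First I would fix the domain. Although \eqref{eqn:halfPiDynamics} is written on $\{-\pi < \theta_{1n} \leq \pi,\ \tilde{\rho}_{n1} > 0\}$, the variable $\theta_{1n}$ is an angle, so the natural manifold is the cylinder $\mathcal{M} = S^1 \times \mathds{R}_{>0}$ and the formula $\theta_{1n} \mapsto \pi - \theta_{1n}$ is to be read modulo $2\pi$. On $S^1$ this is the reflection fixing $\theta_{1n} = \pi/2$ (together with $\theta_{1n} = -\pi/2$), a smooth map whose square is the identity, while the $\tilde{\rho}_{n1}$-component is left unchanged. Hence $F$ is a diffeomorphism of $\mathcal{M}$ with $F \circ F = \mathrm{id}$ and $F \neq \mathrm{id}$, so it is an involution, and its Jacobian is the constant matrix $\mathrm{diag}(-1,1)$.

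Next I would check reversibility in the equivalent pointwise form $DF\big|_{p}\,\mathfrak{X}(p) = -\mathfrak{X}(F(p))$, which is valid because $F^{-1} = F$. Applying $\mathrm{diag}(-1,1)$ to $\mathfrak{X}(\theta_{1n},\tilde{\rho}_{n1})$ flips the sign of the $\theta_{1n}'$ component and leaves the $\tilde{\rho}_{n1}'$ component unchanged. Evaluating $\mathfrak{X}$ at $F(\theta_{1n},\tilde{\rho}_{n1}) = (\pi - \theta_{1n},\tilde{\rho}_{n1})$ and using $\sin(\pi - \theta_{1n}) = \sin\theta_{1n}$ and $\cos(\pi - \theta_{1n}) = -\cos\theta_{1n}$ shows that the $(1 + \sin\theta_{1n})$ and $-2\sin\alpha_1$ terms in the first component are unaffected, while $-\cos\theta_{1n}$ in the second component changes sign. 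Comparing the two sides term by term then yields the identity, establishing reversibility.

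I do not expect a genuine obstacle: the result is really the statement that the $\theta_{1n}'$ equation is invariant (even) under the reflection about $\pi/2$ while the $\tilde{\rho}_{n1}'$ equation changes sign (odd), a parity pattern exactly matched by $\mathrm{diag}(-1,1)$. The one point deserving attention is the domain issue in the first step, since the naive expression $\pi - \theta_{1n}$ leaves the fundamental domain $(-\pi,\pi]$ when $\theta_{1n} < 0$; viewing the dynamics on the cylinder $S^1 \times \mathds{R}_{>0}$ removes this apparent difficulty and makes the involution property transparent.
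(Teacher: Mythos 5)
Your proof is correct and follows essentially the same route as the paper: verify that $F$ is an involution with constant Jacobian $\mathrm{diag}(-1,1)$, then check the pushforward identity using $\sin(\pi-\theta_{1n})=\sin\theta_{1n}$ and $\cos(\pi-\theta_{1n})=-\cos\theta_{1n}$. Your remark about reading $\pi-\theta_{1n}$ modulo $2\pi$ on the cylinder $S^1\times\mathds{R}_{>0}$ is a small but legitimate tidying of a point the paper leaves implicit.
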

\begin{proof}
It is straightforward to show that the diffeomorphism $F(\theta_{1n},\tilde{\rho}_{n1}) = (\pi - \theta_{1n},\tilde{\rho}_{n1})$ is indeed an involution. The associated pushforward maps the vector field $\mathfrak{X}$ defined by \eqref{eqn:halfPiDynamics} into the vector field:
\begin{align*}
&F_*\mathfrak{X}(\theta_{1n},\tilde{\rho}_{n1})
\\
&\quad=
(DF)_{F^{-1}(\theta_{1n},\tilde{\rho}_{n1})} \cdot \mathfrak{X} \big( F^{-1}(\theta_{1n},\tilde{\rho}_{n1}) \big)
\\
&\quad=
\left[\begin{array}{rr}
-1 & 0 \\ 0 & 1
\end{array}\right]
\left[\begin{array}{l}
\frac{1}{\tilde{\rho}_{n1}}\left(1 + \sin(\pi-\theta_{1n})\right) - 2\sin\alpha_1 
\\
-\cos(\pi-\theta_{1n})
\end{array}\right]
\\
&\quad=
-\mathfrak{X}(\theta_{1n},\tilde{\rho}_{n1}).
\end{align*}
Hence the vector field defined by \eqref{eqn:halfPiDynamics} is $F$-reversible.
\end{proof}

Now we state our main result.
\begin{theorem}
Every solution trajectory of \eqref{eqn:halfPiDynamics}:\\
\indent (a) has a conserved quantity:
\begin{equation}
E(\theta_{1n},\tilde{\rho}_{n1}) 
\triangleq 
\tilde{\rho}_{n1}(1 + \sin\theta_{1n}) -\tilde{\rho}_{n1}^2 \sin\alpha_1,
\label{eqn:halfPiDynamics_CONSRVD_qty}
\end{equation}
\indent (b) is a periodic orbit.
\end{theorem}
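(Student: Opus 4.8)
The plan is to establish (a) by direct differentiation and then deduce (b) from (a) together with the $F$-reversibility of the preceding Lemma and Birkhoff's theorem. Throughout I abbreviate $\theta \triangleq \theta_{1n}$ and $\tilde{\rho} \triangleq \tilde{\rho}_{n1}$. For part (a), I would differentiate $E$ along the flow: by the chain rule $\frac{dE}{d\tau} = (\tilde{\rho}\cos\theta)\,\theta' + \big((1+\sin\theta)-2\tilde{\rho}\sin\alpha_1\big)\,\tilde{\rho}'$, and substituting $\theta'$ and $\tilde{\rho}'$ from \eqref{eqn:halfPiDynamics} the contributions $(1+\sin\theta)\cos\theta$ and $2\tilde{\rho}\sin\alpha_1\cos\theta$ each cancel in pairs, leaving $\frac{dE}{d\tau}\equiv 0$. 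Hence $E$ is constant on every trajectory; this step is routine.

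For part (b) the central remarks are that (i) $E$ is invariant under the reverser $F(\theta,\tilde{\rho})=(\pi-\theta,\tilde{\rho})$, since $\sin(\pi-\theta)=\sin\theta$, so every level set $\{E=c\}$ is symmetric about the fixed-point set $\Sigma_F$; and (ii) treating $\theta$ as an angular coordinate (so the phase space is the cylinder $\{\tilde{\rho}>0\}\times S^1$), the relation $\pi-\theta\equiv\theta \pmod{2\pi}$ shows that $\Sigma_F$ consists of the \emph{two} circles $\{\theta=\pi/2\}$ and $\{\theta=-\pi/2\}$, which are precisely the points where $\tilde{\rho}'=-\cos\theta$ vanishes. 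Since the Lemma guarantees $F$-reversibility, Birkhoff's theorem reduces the problem to showing that each orbit meets $\Sigma_F$ at two distinct points.

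To exhibit the two intersections I would use the conserved quantity to confine each orbit to a level curve of $E$ and argue that this curve is compact (either a contractible loop or a curve winding once around the cylinder). On such a closed curve $\tilde{\rho}$ attains a maximum and a minimum at distinct points, and at each extremum $\tilde{\rho}'=-\cos\theta=0$ forces $\theta=\pm\pi/2$; the orbit thus touches $\Sigma_F$ at two distinct points and is periodic by Birkhoff. Concretely, solving $E=c$ as a quadratic in $\tilde{\rho}$ separates the behavior by the sign of $\sin\alpha_1$: for $\sin\alpha_1>0$ and $0<c<1/\sin\alpha_1$ the level set is a contractible loop encircling the center $(\pi/2,1/\sin\alpha_1)$ and crossing $\{\theta=\pi/2\}$ at the two heights $\tilde{\rho}=(1\pm\sqrt{1-c\sin\alpha_1})/\sin\alpha_1$, whereas otherwise the positive root $\tilde{\rho}(\theta)$ is single-valued for all $\theta$ and the level set winds once around the cylinder, crossing $\{\theta=\pi/2\}$ and $\{\theta=-\pi/2\}$ once each.

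The main obstacle is exactly this level-curve analysis: I must verify that each relevant connected component of $\{E=c\}$ is compact and remains in the open region $\{\tilde{\rho}>0\}$, rather than escaping to the polar singularity $\tilde{\rho}\to 0$ or to $\tilde{\rho}\to\infty$. Tracking the discriminant $(1+\sin\theta)^2-4c\sin\alpha_1$ and the signs of the two roots as functions of $\theta$ is what distinguishes the contractible regime from the winding regime and pins down the admissible range of $c$. Care is also needed at the boundary cases --- the separatrix $c=0$ for $\sin\alpha_1>0$, which runs into $\tilde{\rho}=0$ and is not periodic, and the degenerate value $\sin\alpha_1=0$, for which $\{\theta=-\pi/2\}$ degenerates into a line of equilibria --- so that the statement ``every trajectory is periodic'' is understood on the open, $F$-reversible family of orbits filling the phase space away from these exceptional sets.
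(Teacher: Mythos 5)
Your part (a) is the same direct computation the paper performs. For part (b) both you and the paper route the argument through Birkhoff's theorem and the reverser $F$ of the preceding Lemma, but you establish the key fact --- that each orbit meets $\Sigma_F=\{\theta_{1n}=\pm\pi/2\}$ at two points --- by a genuinely different mechanism: the paper argues directly on the vector field (for $\sin\alpha_1<0$, $\theta_{1n}'\geq -2\sin\alpha_1>0$ everywhere so $\theta_{1n}$ winds monotonically through both lines; for $\sin\alpha_1>0$ it runs a four-region nullcline analysis), whereas you use the conserved quantity $E$ itself, showing its level sets are compact curves (contractible loops about the center $(\pi/2,1/\sin\alpha_1)$ when $\sin\alpha_1>0$ and $0<c<1/\sin\alpha_1$, curves winding once around the cylinder otherwise) whose $\tilde{\rho}_{n1}$-extrema must lie on $\Sigma_F$ since there $\partial E/\partial\theta_{1n}=\tilde{\rho}_{n1}\cos\theta_{1n}=0$. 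Your route treats the two sign cases uniformly, and once an orbit is confined to a compact component containing no equilibrium it is automatically periodic, so Birkhoff becomes almost dispensable. Two caveats. First, you elide one step: the extrema you locate are extrema of the \emph{level curve}, and you must still argue the \emph{orbit} reaches them; this is easily patched (if the orbit never crossed $\theta_{1n}=\pm\pi/2$ then $\tilde{\rho}_{n1}'=-\cos\theta_{1n}$ would keep a fixed sign, $\tilde{\rho}_{n1}$ would converge, and the $\omega$-limit set would have to consist of equilibria on the wrong $E$-level --- a contradiction), but it should be said. Second, your remark about the separatrix $E=0$ for $\sin\alpha_1>0$ identifies a genuine exception that the paper's proof passes over: on $\tilde{\rho}_{n1}=(1+\sin\theta_{1n})/\sin\alpha_1$ one has $\theta_{1n}'\equiv-\sin\alpha_1$, so that orbit reaches $\tilde{\rho}_{n1}=0$ at $\theta_{1n}=-\pi/2$ in finite rescaled time, meets $\Sigma_F$ only once, and is not periodic; the theorem therefore holds only away from this single exceptional orbit, and your care here is a strength rather than a defect. (The degenerate case $\sin\alpha_1=0$ you flag is already excluded by the standing hypothesis \eqref{eqn:circPropositionCond1}.)
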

\begin{proof}
\editKSG{Our proof follows the line of thinking from \cite{Mischiati2012894, Udit_CDC16}, as follows.}

(a) A direct calculation of the derivative with respect to the rescaled time $\tau$ would yield
\begin{align*}
\frac{dE}{d\tau} = \frac{\partial E}{\partial \theta_{1n}}\cdot\theta_{1n}^{'} + \frac{\partial E}{\partial \tilde{\rho}_{n1}}\cdot\tilde{\rho}_{n1}^{'} = 0,
\end{align*}
and hence $E(\theta_{1n},\tilde{\rho}_{n1})$ is conserved along the trajectories.

(b) As $\{\pi/2,-\pi/2\}\times\mathds{R}^+$ constitutes the fixed-point set $\Sigma_F$ of the reverser $F$, we can complete the proof by showing that any solution trajectory of \eqref{eqn:halfPiDynamics} through a point on the $\theta_{1n} = \pm\pi/2$ line hits the $\theta_{1n} = \pm\pi/2$ line again.
\begin{figure}[t!]
\centering
$\begin{array}{ccc}
\includegraphics[height=.165\textwidth]{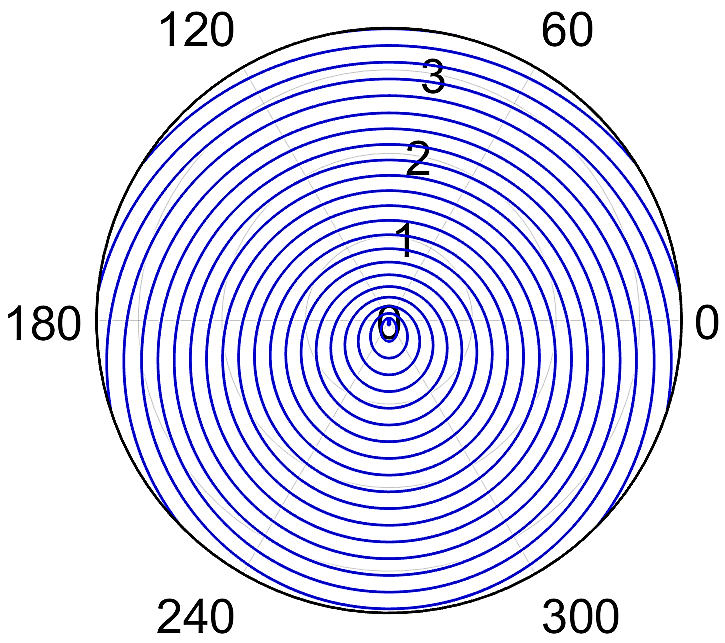}
&&
\includegraphics[height=.165\textwidth]{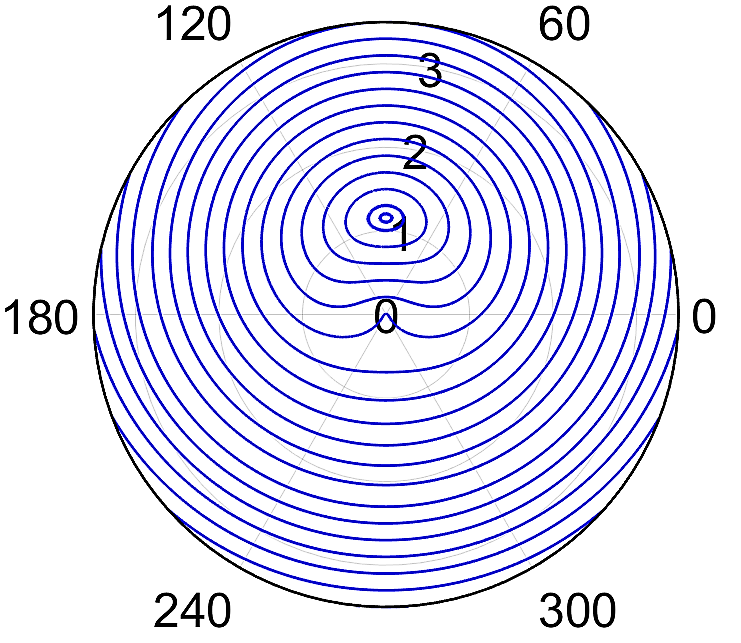}
\\
\small{\textrm{(a) $\alpha_1 = -\pi/3$}} && \small{\textrm{(b) $\alpha_1 = \pi/3$}}
\end{array}$
\caption{\small{Phase portrait of the $(\theta_{1n},\tilde{\rho}_{n1})$ dynamics for the scenario where \eqref{eqn:circPropositionCond1}-\eqref{eqn:circPropositionCond2} hold true and $\alpha_n = \pi/2$.}}
\label{PhasePortrait}
\vspace{-1.5em}
\end{figure}

First we consider the case when $\alpha_1 \in (-\pi,0)$, i.e. $\sin\alpha_1<0$. Then $\theta_{1n}^{'}$ is always positive, which in turn ensures that any trajectory originating from the $\theta_{1n} = \pm\pi/2$ line will travel counter-clockwise until it intersects the $\theta_{1n} = \mp\pi/2$ line again, when $\theta_{1n}$ has been incremented by angle $\pi$ (as shown in Fig~\ref{PhasePortrait}a).
\begin{wrapfigure}{L}{0.21\textwidth}
\centering
\includegraphics[width=.19\textwidth]{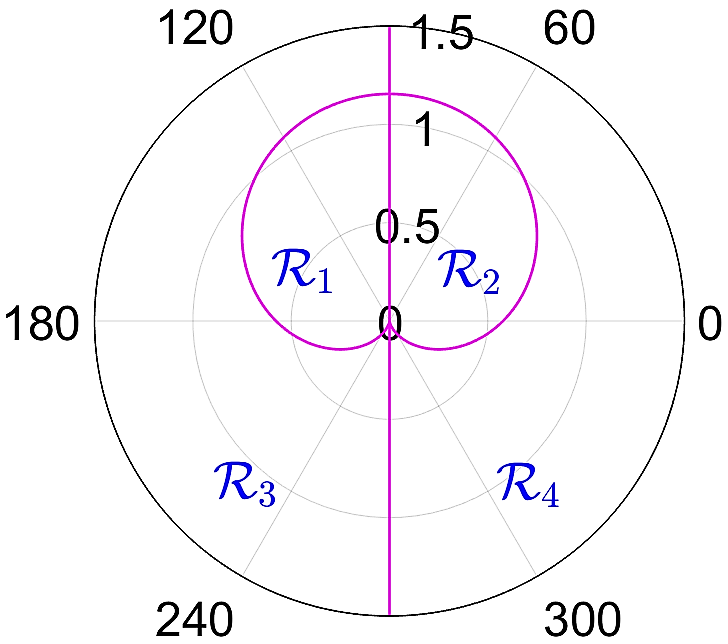}
\caption{\small{Nullclines of $(\theta_{1n},\tilde{\rho}_{n1})$ dynamics with $\alpha_1 =\pi/3$.}}
\label{NullClines_of_47}
\vspace{-.75em}
\end{wrapfigure}

Next we consider the case when $\alpha_1 \in (0,\pi)$, i.e. $\sin\alpha_1>0$. In this case we have an equilibrium point at $(\pi/2,1/\sin\alpha_1)$, and the nullclines (as shown in Fig~\ref{NullClines_of_47}) are given by $\tilde{\rho}_{n1} = \frac{1+\sin\theta_{1n}}{2\sin\alpha_1}$ (for $\theta_{1n}^{'} =0$) and the $\theta_{1n} = \pm\pi/2$ line (for $\tilde{\rho}_{n1}^{'}=0$). Then we partition the state-space into 4 regions $\mathcal{R}_1$ (where $\theta_{1n}^{'}> 0$ and $\tilde{\rho}_{n1}^{'}> 0$), $\mathcal{R}_2$ (where $\theta_{1n}^{'}> 0$ and $\tilde{\rho}_{n1}^{'}< 0$), $\mathcal{R}_3$ (where $\theta_{1n}^{'}< 0$ and $\tilde{\rho}_{n1}^{'}> 0$) and $\mathcal{R}_4$ (where $\theta_{1n}^{'}< 0$ and $\tilde{\rho}_{n1}^{'}< 0$). 

Clearly any trajectory originating on the boundary between
$\mathcal{R}_1$ and $\mathcal{R}_2$ will enter the partition $\mathcal{R}_1$ because $\theta_{1n}^{'}< 0$ and $\tilde{\rho}_{n1}^{'}= 0$ on the boundary. Then it is straightforward to show that these trajectories will later enter into $\mathcal{R}_3$ as $\tilde{\rho}_{n1}^{'}> 0$ and $\theta_{1n}^{'}= 0$ on the boundary between $\mathcal{R}_2$ and $\mathcal{R}_3$. This also implies that trajectories cannot enter into into $\mathcal{R}_2$ from $\mathcal{R}_3$. On the other hand, as $\theta_{1n}^{'}< 0$ within $\mathcal{R}_3$ it can be shown that trajectories can leave this partition through the intersection of the $\theta_{1n} = \pi/2$ line and the boundary between $\mathcal{R}_3$ and $\mathcal{R}_4$. This allows us to conclude that any solution trajectory of \eqref{eqn:halfPiDynamics} crosses the $\theta_{1n} = \pm\pi/2$ line twice.
\end{proof}
%
%
%

%
%
%
\section{Stability Analysis for A 3-Agent System}
\label{eqn:stabilityAnalysis}
%
Now we narrow our focus to \editKSG{a special case (i.e., $n=3$),} wherein the pursuit graph is defined as $\mathcal{G} = (\{1,2,3\},\{(1,2),(2,1),(3,1)\})$, i.e. agent 1 and 2 are engaged in a mutual pursuit and agent 3 pursues agent 1. \editKG{As demonstrated in \cite{Galloway2013SymmetryStrategies}, the dynamics for mutual pursuit on the CB manifold simplify to $\dot{\rho}_{12} = -\cos(\alpha_1)-\cos(\alpha_2)$. Therefore, we can assess the stability of the special solutions described in \textit{Proposition~\ref{prop:relativeEquilibrium}} and \textit{Proposition~\ref{prop:pureShapeEquilibria}} completely in terms of the branch dynamics. We proceed by} linearization of the dynamics \eqref{eqn:thetaPrime}-\eqref{eqn:rhoTildePrime}.

By defining $x \triangleq (\theta_{13},\tilde{\rho}_{31})^{T}$ and letting $f(x)$ denote the dynamics \eqref{eqn:thetaPrime}-\eqref{eqn:rhoTildePrime}, it is straightforward to show that the Jacobian for these dynamics is given by
\begin{align}
\label{eqn:generalJacobian}
	\frac{\partial f}{\partial x} = \left(\begin{array}{cc} 
	\frac{\cos \theta_{13}}{\tilde{\rho}_{31}} &  -\frac{\sin \alpha_3 + \sin \theta_{13}}{\tilde{\rho}_{31}^{2}}\\ \sin \theta_{13}  & \cos(\alpha_1) + \cos(\alpha_2) \end{array}\right).
\end{align}

\begin{proposition}
\label{prop:stability} Consider a 3-agent system wherein  agents 1 and 2 are engaged in a mutual pursuit, and agent 3 pursues agent 1 with CB bearing angle $\alpha_3$. The following provides a necessary and sufficient condition for local stability of the circling and pure shape equilibria:
\\
\indent (a) The circling equilibria of \textit{Proposition~\ref{prop:relativeEquilibrium}} are stable if $\cos \alpha_3 > 0$ and unstable if $\cos \alpha_3 <0$.
\\
\indent (b) The pure shape equilibria of \textit{Proposition~\ref{prop:pureShapeEquilibria}} are stable if $2\cos(\alpha_1 + \alpha_2-\alpha_3) + \cos(\alpha_3) < 0$ and unstable if $2\cos(\alpha_1 + \alpha_2-\alpha_3) + \cos(\alpha_3) > 0$. 
\end{proposition}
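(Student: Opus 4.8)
The plan is to exploit the reduction already highlighted in the text: because agents $1$ and $2$ are in mutual pursuit, on $M_{CB(\pmb{\alpha})}$ the angle $\theta_{21}$ is pinned to $\alpha_2$ and the only remaining cycle dynamics is the scalar equation $\dot{\rho}_{12} = -\cos\alpha_1 - \cos\alpha_2$, which decouples from the rest. Consequently the stability of both families of equilibria is governed entirely by the planar branch system \eqref{eqn:thetaPrime}--\eqref{eqn:rhoTildePrime} in $(\theta_{13},\tilde{\rho}_{31})$, whose Jacobian is \eqref{eqn:generalJacobian}. I would therefore evaluate \eqref{eqn:generalJacobian} at the relevant equilibrium and classify the linearization through the sign of its trace and determinant, using the standard fact that a planar equilibrium is asymptotically stable precisely when the determinant is positive and the trace is negative.

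The crucial preliminary step is to simplify the Jacobian using the equilibrium relations themselves. Setting \eqref{eqn:thetaPrime} and \eqref{eqn:rhoTildePrime} to zero gives $\sin\alpha_3 + \sin\theta_{13} = \tilde{\rho}_{31}(\sin\alpha_1+\sin\alpha_2)$ and $\tilde{\rho}_{31}(\cos\alpha_1+\cos\alpha_2) = \cos\alpha_3 + \cos\theta_{13}$. Substituting these into \eqref{eqn:generalJacobian} collapses the trace to $\frac{2\cos\theta_{13}+\cos\alpha_3}{\tilde{\rho}_{31}}$ and the determinant to $\frac{\cos(\theta_{13}-\alpha_1)+\cos(\theta_{13}-\alpha_2)}{\tilde{\rho}_{31}}$. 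Since $\tilde{\rho}_{31}>0$, the sign of the trace is exactly the sign of $2\cos\theta_{13}+\cos\alpha_3$, and it only remains to insert the equilibrium value of $\theta_{13}$ in each case and confirm the determinant is positive.

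For part (a) I would substitute the circling values $\theta_{13}=\pi-\alpha_3$ and $\tilde{\rho}_{31}=\sin\alpha_3/\sin\alpha_1$ from \textit{Proposition~\ref{prop:relativeEquilibrium}}, together with the relative-equilibrium identity $\cos\alpha_1+\cos\alpha_2=0$ (so the $(2,2)$ entry vanishes). This yields trace $-\cos\alpha_3\sin\alpha_1/\sin\alpha_3$ and determinant $2\sin^2\alpha_1>0$ computed directly; invoking the existence condition $\sin\alpha_1\sin\alpha_3>0$ from \eqref{eqn:circPropositionCond1} makes $\sin\alpha_1/\sin\alpha_3>0$, so the trace has the sign of $-\cos\alpha_3$ and the stated dichotomy follows. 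For part (b) I would substitute the pure shape values from \eqref{eqn:pureShapeEquilibriumValues}, but the index $k$ must be identified first: matching the cycle pure shape value $\theta_{21}=\pi-\alpha_1+2\tau_k$ against the mutual pursuit value $\theta_{21}=\alpha_2$ (equivalently, enforcing the existence sign condition for the cycle) forces $k=1$, so that $\tau_1=\tfrac12(\alpha_1+\alpha_2)-\tfrac{\pi}{2}$ and $\theta_{13}=\alpha_1+\alpha_2-\alpha_3$. The trace then becomes $\frac{2\cos(\alpha_1+\alpha_2-\alpha_3)+\cos\alpha_3}{\tilde{\rho}_{31}}$, reproducing the stated criterion, while a product-to-sum identity gives $\cos(\theta_{13}-\alpha_1)+\cos(\theta_{13}-\alpha_2)=2\sin(\alpha_1-\tau_1)\sin(\alpha_3-\tau_1)$, which upon dividing by $\tilde{\rho}_{31}=\sin(\alpha_3-\tau_1)/\sin(\alpha_1-\tau_1)$ reduces the determinant to $2\sin^2(\alpha_1-\tau_1)>0$.

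I expect the main obstacle to be the bookkeeping in part (b): correctly pinning down $k=1$ (and hence the clean value $\theta_{13}=\alpha_1+\alpha_2-\alpha_3$), and then establishing strict positivity of the determinant via the identity above, since without determinant positivity the trace sign would not by itself decide stability. Once the equilibrium relations are used to eliminate the off-diagonal and $(2,2)$ terms, the trace computation and the final sign comparison are routine.
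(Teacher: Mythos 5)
Your proof is correct and follows essentially the same route as the paper: linearize the decoupled planar branch dynamics via the Jacobian \eqref{eqn:generalJacobian} at the equilibrium values \eqref{EQ_PURE_SHAPE_3AGENT}, verify that the determinant is strictly positive (your $2\sin^2(\alpha_1-\tau_1)$ is exactly the paper's $2\cos^2(\alpha_1-\beta)$), and read off stability from the sign of the trace, which collapses to $2\cos\theta^*_{13}+\cos\alpha_3$ just as the paper notes. The only cosmetic difference is that the paper derives part (a) as the special case $\tau_k\in\{0,\pi\}$ of part (b), whereas you substitute the circling values directly; the computations agree.
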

\begin{proof}
\editKSG{For mutual pursuit pure shape equilibria}, it can be shown that $\tau_k$ from \textit{Proposition~\ref{prop:pureShapeEquilibria}} must be $\tau_k = \frac{\alpha_{1} + \alpha_{2}}{2} - \frac{\pi}{2}$. Thus from \eqref{eqn:pureShapeEquilibriumValues}, we have equilibrium values for the branch agent given by
\begin{equation}
{\theta}^*_{13} = \alpha_1 + \alpha_2 - \alpha_3, \quad
{\tilde{\rho}}^*_{31} = \frac{\cos\left(\alpha_3 - \frac{\alpha_1 + \alpha_2}{2}\right)}{\cos\left(\alpha_1 - \frac{\alpha_1 + \alpha_2}{2}\right)}.
\label{EQ_PURE_SHAPE_3AGENT}
\end{equation}
Then letting $\beta \triangleq \frac{\alpha_{1} + \alpha_{2}}{2}$ and substituting \eqref{EQ_PURE_SHAPE_3AGENT} into \eqref{eqn:generalJacobian}, we can express the Jacobian $\frac{\partial f}{\partial x}\Big|_{PS}$ as
\begin{displaymath}
\left(\begin{array}{cc} 
\frac{\cos(2\beta -\alpha_3) \cos(\alpha_1 - \beta)}{\cos(\alpha_3 - \beta)} &  -\frac{\cos^{2}(\alpha_1 - \beta)[\sin \alpha_3 + \sin(2\beta -\alpha_3)]}{\cos^{2}(\alpha_3 - \beta)}\\ \sin(2\beta -\alpha_3)  & 2\cos(\beta)\cos(\alpha_1 - \beta) 
\end{array}\right),
\end{displaymath}
and the associated determinant is
\begin{equation}
\det\left(\frac{\partial f}{\partial x}\Big|_{PS} \right) 
=
2\cos^{2}(\alpha_1 - \beta),
\end{equation}
which is strictly positive since \textit{Proposition~\ref{prop:pureShapeEquilibria}} requires $\cos(\alpha_1 - \beta) \neq 0$. Since the eigenvalues are given (in terms of the trace and determinant) by
\begin{displaymath}
\textrm{\small{$\lambda = \frac{1}{2}\left(\text{tr}\left(\frac{\partial f}{\partial x}\Big|_{PS}\right) \pm \sqrt{\text{tr}^{2}\left(\frac{\partial f}{\partial x}\Big|_{PS}\right) - 4\text{det}\left(\frac{\partial f}{\partial x}\Big|_{PS}\right)}\right)$}},
\end{displaymath}
and the determinant is strictly positive, it holds that the real part of the eigenvalues has the same sign as the trace, i.e.
\begin{equation}
\sgn(Re(\lambda))
=
\sgn\Bigl(2\cos(\alpha_1 + \alpha_2-\alpha_3) + \cos(\alpha_3) \Bigr),
\label{eqn:eigenvalueSignPS}
\end{equation}
where we have used the fact that \textit{Proposition~\ref{prop:pureShapeEquilibria}} requires $\cos\left(\alpha_1 - \beta\right)\cos\left(\alpha_3 - \beta\right) > 0$. Note that in terms of the equilibrium values given in \eqref{EQ_PURE_SHAPE_3AGENT}, we can also express \eqref{eqn:eigenvalueSignPS} as $\sgn(Re(\lambda))=\sgn\left(2\cos{\theta}^*_{13} + \cos\alpha_3 \right)$.

Finally, note that circling equlibria can be viewed as a special case of pure shape equilibria for which $\tau_k = 0$ or $\pi$ (i.e. $\beta = \pm \frac{\pi}{2}$). Therefore \eqref{eqn:eigenvalueSignPS} applies, and since we have $\alpha_2 = \pi - \alpha_1$ on a circling equilibrium, it follows that $2\cos(\alpha_1 + \alpha_2-\alpha_3) + \cos(\alpha_3) = -\cos\alpha_3$. 
\end{proof}

\begin{remark}
If we evaluate \eqref{eqn:generalJacobian} at the rectilinear equilibrium from \textit{Proposition~\ref{prop:relativeEquilibrium}}, the Jacobian can be expressed as
\begin{align}
	\frac{\partial f}{\partial x}\Big|_{rect} = 	\left(\begin{array}{cc} 
	-\frac{\cos \alpha_3}{{\tilde{\rho}}^*_{31}} &  0\\ -\sin \alpha_{3}  & 0 \end{array}\right).
\end{align}
From this it follows that the corresponding eigenvalues are given by $\lambda = -\frac{\cos \alpha_3}{{\tilde{\rho}}^*_{31}}$ and a zero eigenvalue resulting from the fact that there exists not a single equilibrium point but a whole continuum of equilibria. Numerical simulations and phase portrait analysis suggest that the continuum of rectilinear equilibria is attractive for $\cos \alpha_3>0$, a conjecture which we intend to explore further in future work.
\end{remark}
%
%
%
%
%
%
%
%

%
%
%
\section{Future work}
\label{sec:SecondBranchDynamics}
%
\editKG{There are two immediate extensions of our proposed framework which we intend to analyze more completely in future work. The first extension considers multiple agents pursuing agent $1$ using a CB pursuit law, i.e. multiple branches off one cycle agent. Clearly the branch dynamics will be independent of one another, and it follows that the results of \emph{Propositions~\ref{prop:relativeEquilibrium}}, \emph{\ref{prop:pureShapeEquilibria}}, and \emph{\ref{prop:stability}} could be extended to this multiple branch case. The other extension involves a single open chain of agents in CB pursuit with its head pursuing agent $1$, i.e. a multi-tiered branch attached to a single cycle agent. In this case, branch agents are influenced both by the cycle agents and by any other branch agent which is closer to the cycle. As will be demonstrated in future work, it follows that results analogous to those in {\it Propositions \ref{prop:relativeEquilibrium}, \ref{prop:pureShapeEquilibria}}, and {\it \ref{prop:stability}} can be extended to this case.}
\bibliographystyle{IEEEtran}
\bibliography{Biswa_Refs,GallowayMendeley}  

\begin{thebibliography}{10}
\providecommand{\url}[1]{#1}
\csname url@samestyle\endcsname
\providecommand{\newblock}{\relax}
\providecommand{\bibinfo}[2]{#2}
\providecommand{\BIBentrySTDinterwordspacing}{\spaceskip=0pt\relax}
\providecommand{\BIBentryALTinterwordstretchfactor}{4}
\providecommand{\BIBentryALTinterwordspacing}{\spaceskip=\fontdimen2\font plus
\BIBentryALTinterwordstretchfactor\fontdimen3\font minus
  \fontdimen4\font\relax}
\providecommand{\BIBforeignlanguage}[2]{{%
\expandafter\ifx\csname l@#1\endcsname\relax
\typeout{** WARNING: IEEEtran.bst: No hyphenation pattern has been}%
\typeout{** loaded for the language `#1'. Using the pattern for}%
\typeout{** the default language instead.}%
\else
\language=\csname l@#1\endcsname
\fi
#2}}
\providecommand{\BIBdecl}{\relax}
\BIBdecl

\bibitem{Justh_PSK_SCL04}
E.~W. Justh and P.~S. Krishnaprasad, ``Equilibria and steering laws for planar
  formations,'' \emph{Systems \& Control Letters}, vol.~52, no.~1, pp. 25 --
  38, 2004.

\bibitem{5160735}
J.~L. Ramirez, M.~Pavone, E.~Frazzoli, and D.~Miller, ``Distributed control of
  spacecraft formation via cyclic pursuit: Theory and experiments,'' in
  \emph{Proceedings of the American Control Conference (ACC)}, 2009, pp. 4811
  -- 4817.

\bibitem{CollectiveMot_PrstEscp_Couzin}
P.~Romanczuk, I.~D. Couzin, and L.~Schimansky-Geier, ``Collective motion due to
  individual escape and pursuit response,'' \emph{Physical Review Letters},
  vol. 102, no.~1, p. 010602, 2009.

\bibitem{Marshall2004}
J.~A. Marshall, M.~E. Broucke, and B.~A. Francis, ``Formations of vehicles in
  cyclic pursuit,'' \emph{IEEE Transactions on Automatic Control}, vol.~49,
  no.~11, pp. 1963--1974, 2004.

\bibitem{Marshall20063}
------, ``Pursuit formations of unicycles,'' \emph{Automatica}, vol.~42, no.~1,
  pp. 3 -- 12, 2006.

\bibitem{Sinha20071954}
A.~Sinha and D.~Ghose, ``Generalization of nonlinear cyclic pursuit,''
  \emph{Automatica}, vol.~43, no.~11, pp. 1954 -- 1960, 2007.

\bibitem{Galloway2013SymmetryStrategies}
K.~S. Galloway, E.~W. Justh, P.~S. Krishnaprasad, and P.~R.~S. A, ``{Symmetry
  and reduction in collectives : cyclic pursuit strategies},''
  \emph{Proceedings of the Royal Society A: Mathematical, Physical and
  Engineering Sciences}, vol. 469, no. 2158, pp. 1--23, 2013.

\bibitem{Galloway2016SymmetryPursuit}
------, ``{Symmetry and reduction in collectives : low-dimensional cyclic
  pursuit},'' \emph{Proceedings of the Royal Society A: Mathematical, Physical
  and Engineering Sciences}, vol. 472, pp. 1--21, 2016.

\bibitem{Galloway2015StationPursuit}
K.~S. Galloway and B.~Dey, ``Station keeping through beacon-referenced cyclic
  pursuit,'' in \emph{Proceedings of the American Control Conference (ACC)},
  2015, pp. 4765--4770.

\bibitem{KSG_BD_2016_ACC}
------, ``Stability and pure shape equilibria for beacon-referenced cyclic
  pursuit,'' in \emph{Proceedings of the American Control Conference (ACC)},
  2016, pp. 161--166.

\bibitem{Mallik_Sinha_ECC_15}
G.~R. Mallik, S.~Daingade, and A.~Sinha, ``Consensus based deviated cyclic
  pursuit for target tracking applications,'' in \emph{Proceedings of the
  European Control Conference (ECC)}, 2015, pp. 1718 -- 1723.

\bibitem{Daingade2016AImplementation}
S.~Daingade, A.~Sinha, A.~Vivek~Borkar, and H.~Arya, ``{A variant of cyclic
  pursuit for target tracking applications: theory and implementation},''
  \emph{Autonomous Robots}, vol.~40, pp. 669--686, 2016.

\bibitem{Galloway2016StateSystems}
K.~Galloway and L.~DeVries, ``State observation and parameter estimation in
  cyclic pursuit systems,'' in \emph{Proceedings of the IEEE Conference on
  Decision and Control (CDC)}, 2016, pp. 1781--1786.

\bibitem{Wei2009PursuitGame}
E.~Wei, E.~W. Justh, and P.~S. Krishnaprasad, ``{Pursuit and an evolutionary
  game},'' \emph{Proceedings of the Royal Society A: Mathematical, Physical and
  Engineering Sciences}, vol. 465, no. 2105, pp. 1539--1559, 2009.

\bibitem{Mischiati2012894}
M.~Mischiati and P.~S. Krishnaprasad, ``The dynamics of mutual motion
  camouflage,'' \emph{Systems \& Control Letters}, vol.~61, no.~9, pp. 894 --
  903, 2012.

\bibitem{Udit_CDC16}
U.~Halder, B.~Schlotfeldt, and P.~S. Krishnaprasad, ``Steering for beacon
  pursuit under limited sensing,'' in \emph{Proceedings of the 55th Conference
  on Decision and Control (CDC)}, 2016, pp. 3848--3855.

\bibitem{Birkhoff_Original}
G.~D. Birkhoff, ``The restricted problem of three bodies,'' \emph{Rendiconti
  del Circolo Matematico di Palermo}, vol.~39, no.~1, pp. 265--334, 1915.

\end{thebibliography}
%
%
%
\end{document}